\newcommand{\I}{\mathcal{I}}
\newtheorem{theorem}{Theorem}
\newtheorem{claim}[theorem]{Claim}
\newtheorem{lemma}[theorem]{Lemma}
\begin{document}
\title{A note on Cunningham's algorithm for matroid intersection}
\author{Huy L. Nguy\~{\^e}n}
\maketitle

\abstract{In the matroid intersection problem, we are given two matroids of rank $r$ on a common ground set $E$ of $n$ elements and the goal is to find the maximum set that is independent in both matroids. In this note, we show that Cunningham's algorithm for matroid intersection can be implemented to use $O(nr\log^2(r))$ independent oracle calls.}

\section{Introduction}

Consider two matroids $(E, \I_1)$ and $(E, \I_2)$ with rank functions $r_1, r_2$. Let $n=|E|$ and $r$ be the maximum size of an independent set of either matroids. The goal of the matroid intersection problem is to find the maximum set $I\in \I_1\cap \I_2$. Let $T_{ind}$ be the running time of the independent oracle for either matroid. We would like an algorithm that makes as few calls to the independent oracle as possible.

A basic version of Cunningham's algorithm~\cite{Cunningham86} is as follows. We start with an empty solution $S$ and iteratively increase its size by finding augmenting paths. For any $S\in \I_1 \cap \I_2$, an exchange graph $D(S)$ is the directed bipartite graph with bipartition $S$ and $E\setminus S$ such that $(y,x)$ is an arc if $S-y+x\in \I_1$ and $(x,y)$ is an arc if $S-y+x\in \I_2$. Define $X_1 = \{x\not\in S:S+x\in \I_1\}$ be the set of sources and $X_2 = \{x\not\in S:S+x\in \I_2\}$ be the set of sinks.

The algorithm finds a shortest path $P$ from $X_1$ and $X_2$ and replace $S$ with $S\Delta P$ (the symmetric difference between $S$ and $P$).

The sets $X_1$ and $X_2$ can be found in time $O(n T_{ind})$. If $X_1\cap X_2 \ne \emptyset$ then we can simply add those elements to $S$. Thus, we focus on the case $X_1\cap X_2 = \emptyset$. In this case, the running time of the iteration depends on the length of the shortest path, which is bounded by the following result of Cunningham.

\begin{lemma}[\cite{Cunningham86}]
Let $p$ be the maximum size of a common independent set and let $S$ be a common independent set that is not maximum. There exists an augmenting path of length at most $2|S|/(p-|S|)+2$.
\end{lemma}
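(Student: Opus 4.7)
The plan is a BFS-layering argument in the spirit of Hopcroft--Karp. Fix a maximum common independent set $T\in \I_1\cap \I_2$ and set $d:=p-|S|\ge 1$. Perform BFS in $D(S)$ from $X_1$, obtaining distance layers $L_0=X_1,L_1,\dots,L_\ell$, where $\ell$ is the distance from $X_1$ to $X_2$ and hence the length of the shortest augmenting path. Because every arc of $D(S)$ crosses the bipartition $(S,E\setminus S)$ and both $X_1$ and $X_2$ lie in $E\setminus S$, layers alternate sides: $L_0,L_2,\dots\subseteq E\setminus S$ and $L_1,L_3,\dots\subseteq S$. In particular $\ell$ is even, say $\ell=2m$, and $L_1,L_3,\dots,L_{2m-1}$ are pairwise disjoint subsets of $S$.

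The target inequality $\ell\le 2|S|/d+2$ rearranges to $(m-1)d\le|S|$, so it suffices to show that at least $m-1$ of the $m$ odd layers have size at least $d$. The cleanest route is to produce $d$ pairwise internally vertex-disjoint augmenting paths $P_1,\dots,P_d$ in $D(S)$ from $X_1$ to $X_2$: each has length at least $\ell$ and therefore uses at least $m$ elements of $S$; since $S$-vertices only appear internally on such a path, internal disjointness forces these $S$-uses to fit inside $S$, giving $d\cdot m\le|S|$, i.e.\ the (slightly stronger) bound $\ell\le 2|S|/d$.

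The $d$ disjoint augmenting paths would be extracted from the symmetric difference $T\Delta S$: the standard matroid-intersection exchange lemmas guarantee that valid exchanges $y\leftrightarrow x$ between $S\setminus T$ and $T\setminus S$ in $\I_1$ (resp.\ $\I_2$) appear as $\I_1$-arcs (resp.\ $\I_2$-arcs) of $D(S)$, so the alternating structure on $T\cup S$ lifts into the exchange graph. A Menger-type argument inside this lifted subgraph then yields the required $d$ vertex-disjoint $X_1$-to-$X_2$ paths.

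The main obstacle is precisely this last step. For bipartite matching, the decomposition of $M\Delta M^*$ into alternating paths and cycles is immediate; in matroid intersection one must translate set-theoretic exchangeability between $T$ and $S$ into concrete arcs of $D(S)$ and then run Menger on the resulting subgraph. I would invoke the classical exchange theorems of Krogdahl and Brualdi that underlie the correctness of Cunningham's algorithm to bridge this gap; once the structural lemma is in hand, the counting sketched above closes the proof (with the ``$+2$'' absorbing a constant-size slack if one can only extract internally disjoint paths sharing endpoints in $X_1\cup X_2$).
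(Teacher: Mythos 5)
The paper does not actually prove this lemma---it is quoted from Cunningham, whose own argument (and the one most textbooks give) is a rank--duality count over the BFS layers: for each $1\le i\le m$ (where the shortest path has $2m$ arcs), the cover $E=\{v: d(v)\le 2i-1\}\cup\{v: d(v)\ge 2i-1\}$ together with the weak duality $p\le r_2(E_1)+r_1(E_2)$ certifies $p\le |S|+|L_{2i-1}|$, and summing the $\ge p-|S|$ lower bounds on the odd layers gives the claim. Your Hopcroft--Karp route is genuinely different, but it does work, and the step you flag as ``the main obstacle'' closes in exactly the way you guess---in fact more easily than you fear, with no Menger argument at all. Extend $S$ inside $S\cup T$ to sets $S'\in\I_1$ and $S''\in\I_2$ of size $p$ (augmentation axiom); since subsets of independent sets are independent, every element of $S'\setminus S$ lies in $X_1$ and every element of $S''\setminus S$ lies in $X_2$. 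Brualdi's exchange lemma applied to the equicardinal pair $S',T$ in the first matroid gives a perfect matching between $S'\setminus T=S\setminus T$ and $T\setminus S'$ with $S'-y+x\in\I_1$, hence $S-y+x\in\I_1$, i.e.\ genuine $\I_1$-arcs $(y,x)$ of $D(S)$; symmetrically for $S'',T$ one gets $\I_2$-arcs $(x,y)$ matching $S\setminus T$ with $T\setminus S''$. In the union of these two matchings every vertex has in- and out-degree at most one, the in-degree-zero vertices are exactly $S'\setminus S\subseteq X_1$ and the out-degree-zero vertices are exactly $S''\setminus S\subseteq X_2$, so the union decomposes into cycles plus exactly $p-|S|$ pairwise vertex-disjoint directed $X_1$--$X_2$ paths of $D(S)$.

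Two refinements to your write-up follow. First, the paths obtained are fully vertex-disjoint, not merely internally disjoint, and each one, having length at least $\ell=2m$ and alternating sides, contains at least $m$ elements of $S$; hence $(p-|S|)\,m\le |S|$ and $\ell\le 2|S|/(p-|S|)$, which is slightly stronger than the stated bound (the ``$+2$'' is slack, so you need not reserve it for endpoint-sharing). Second, be careful not to call these $p-|S|$ paths ``augmenting paths'' in the algorithmic sense: taking the symmetric difference of $S$ with one of them need not yield a common independent set (this is precisely where the Hopcroft--Karp blocking-path strategy breaks for matroid intersection). For your counting argument this is irrelevant---you only need that they are directed $X_1$--$X_2$ paths in $D(S)$, hence of length at least the BFS distance $\ell$---but the distinction is worth stating explicitly. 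With the Brualdi step spelled out as above, your proof is complete and is an attractive alternative to the duality argument, at the cost of invoking the exchange-matching machinery rather than only the rank functions.
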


It is usually stated that a more sophisticated version of Cunningham's algorithm uses $\tilde{O}(nr^{3/2})$ oracle calls. In this note, we will show that even the basic version can be implemented using $\tilde{O}(nr)$ oracle calls.

\section{Implementation of Cunningham's algorithm}

In this section we describe an implementation of Cunningham's algorithm. We find a shortest augmenting path using BFS from $X_1$. The algorithm constructs the graph adaptively as needed for BFS. In every step, given the set $V_d$ of vertices at distance $d$ from the sources, the algorithm needs to find their neighbors that are not reachable before. Those neighbors will be all elements at distance $d+1$ from the sources. Let $A\subseteq S$ and $B\subseteq V\setminus S$ be the elements that are reachable so far. Consider 2 cases depending on the parity of $d$.

\paragraph{Case 1.} Distance $d$ is even. Consider a vertex $v\in V_d \subseteq E\setminus S$. If $v\in X_2$ then the algorithm has found a path from $X_1$ to $X_2$. If not, we need to find $U = \{u\in S\setminus A : S-u+v\in \I_2\}$. 

We describe an algorithm for finding $U$ in time $O((|S|+T_{ind})(|U|+1)\log|S|)$. Initialize $U=\emptyset$.
Consider an ordering $s_1, s_2,\ldots, s_k$ of $S$ such that $A$ form a prefix of the ordering. We use binary search to find the minimum $i$ such that $\{s_1,\ldots, s_i\}\cup\{v\} \not\in \I_2$. If $i\le |A|+|U|$ then algorithm finishes since we have found all elements of the unique circuit of $S\cup\{v\}$. If not, it must be the case that $s_i$ belongs to the unique circuit of $S\cup\{v\}$ and it is an element in $U$. Update $U\gets U\cup\{s_i\}$ and then swap $s_i$ with $s_{|A|+|U|}$. Repeat this algorithm until all elements of $U$ are found. 

\paragraph{Case 2.} Distance $d$ is odd. We need to find $T=\{t \in E\setminus S\setminus B : S-v+t \in \I_1$ for some $v\in V_d\}$. We describe an algorithm for finding $T$ in time $O(n \log |S| (T_{ind}+|S|))$. For each element $u\in E\setminus S\setminus B$, we check if $S\setminus V_d\cup\{u\}\in \I_1$. If so, $u\in T$ because the unique circuit of $S\cup\{u\}$ contains some element in $V_d$. In fact, one can find one $v\in V_d$ such that $S-v+u\in \I_1$ by binary search: we find the minimum $i$ such that the union of $S\cup\{u\}\setminus V_d$ and the first $i$ elements of $V_d$ is not in $\I_1$. The element we are looking for is the $i$th element in $V_d$.

\begin{claim}
The running time of the algorithm is $O(nr\log^2 r (T_{ind}+r))$.
\end{claim}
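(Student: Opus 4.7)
The plan is to bound the running time of a single BFS iteration as a function of the length $L$ of the shortest augmenting path it discovers, and then sum over iterations using Cunningham's lemma to obtain $\sum_i L_i = O(r\log r)$.

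For the even-level work within one iteration (Case 1), I would sum the stated per-vertex cost $O((|S|+T_{ind})(|U_v|+1)\log|S|)$ over all vertices $v$ processed at even distances. The key amortization is that each element of $S$ enters $A$ at most once during a BFS, so $\sum_v |U_v| \le |S| \le r$, while trivially $\sum_v 1 \le n$. Hence the total even-level work in one iteration is
\[
O\bigl((r+T_{ind})\log r\cdot(n+r)\bigr) \;=\; O\bigl(n(r+T_{ind})\log r\bigr),
\]
independent of $L$.

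For the odd-level work (Case 2), each level costs $O(n\log r\,(T_{ind}+r))$ by the stated bound, and there are at most $L$ odd levels, giving odd-level cost $O(nL\log r\,(r+T_{ind}))$ per iteration. Combining the two cases, one iteration that finds a path of length $L$ runs in time $O\bigl(n(L+1)(r+T_{ind})\log r\bigr)$.

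Finally I would sum over the $p \le r$ augmenting iterations. At iteration $i$ we have $|S|=i-1$, so Cunningham's lemma gives $L_i \le 2(i-1)/(p-i+1)+2$, and substituting $j=p-i+1$ yields
\[
\sum_{i=1}^p L_i \;\le\; 2p + 2\sum_{j=1}^p \frac{p-j}{j} \;=\; O(p\log p) \;=\; O(r\log r).
\]
Multiplying, the total running time is $O\bigl(n\log r\,(r+T_{ind})\bigr)\cdot O(r\log r) = O\bigl(nr\log^2 r\,(T_{ind}+r)\bigr)$, as claimed. The only delicate point is the amortized accounting of the $|U_v|$ terms in the even case: once one observes that they telescope against $|S|$ so that even-level work carries no $L$-factor, everything else reduces to the standard harmonic-sum estimate on path lengths.
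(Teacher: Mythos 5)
Your proposal is correct and follows essentially the same route as the paper's proof: amortize the Case 1 cost over a single BFS via the observation that each element of $S$ enters $U$ at most once, charge Case 2 per level so that it scales with the path length, and sum the path lengths over all augmentations using Cunningham's lemma to get the $O(r\log r)$ harmonic bound. You merely spell out the harmonic-sum computation and the per-iteration accounting more explicitly than the paper does.
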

\begin{proof}
Next we analyze the running time of the whole algorithm across all iterations of BFS. In case 1, each element in $S$ can be in $U$ at most once when its status changes from unreachable to reachable. Thus, the total running time of case 1 is $O(n\log r (T_{ind}+r))$. The running time of case 2 depends on the length of the shortest path. If the length of the shortest path is $\ell$, the total running time of case 2 is $O(\ell n\log r (T_{ind}+r))$. 

By the bound on the length of the shortest augmenting path mentioned above, the total length of the augmenting paths is $O(r\log r)$.

Thus, the total running time is $O(nr\log^2 r (T_{ind}+r))$.
\end{proof}
\bibliographystyle{plain}
\bibliography{matroid}
\end{document}